\documentclass[11pt,a4paper]{article}
\pdfoutput=1

\usepackage[utf8]{inputenc}
\usepackage[margin=1.1in]{geometry}
\usepackage{amsmath,amssymb,amsthm,mathtools}
\usepackage{microtype}
\usepackage{booktabs}
\usepackage{array}
\usepackage[hidelinks]{hyperref}
\usepackage[capitalise,nameinlink]{cleveref}

\hypersetup{
  pdftitle={Removing Trace Preservation from a Universal Spectral Bound for 2-Positive Maps},
  pdfauthor={Petri Tanninen},
  pdfsubject={Mathematical physics; spectral theory of positive maps},
  pdfkeywords={2-positive maps, conditionally completely positive maps, quantum dynamical semigroups, relaxation rates, eigenvalue inequalities}
}

\theoremstyle{plain}
\newtheorem{theorem}{Theorem}
\newtheorem{lemma}[theorem]{Lemma}
\newtheorem{proposition}[theorem]{Proposition}
\newtheorem{corollary}[theorem]{Corollary}
\newtheorem*{theoremA}{Theorem A {\normalfont(vom Ende, Chru\'sci\'nski, Kimura, Muratore-Ginanneschi~\cite{vEC+26})}}
\newtheorem*{conjecture}{Conjecture {\normalfont(\cite[Sec.~3, Eq.~(10)]{vEC+26})}}
\theoremstyle{definition}
\newtheorem{definition}[theorem]{Definition}
\theoremstyle{remark}
\newtheorem{remark}[theorem]{Remark}

\crefname{theorem}{Theorem}{Theorems}
\crefname{lemma}{Lemma}{Lemmas}
\crefname{proposition}{Proposition}{Propositions}
\crefname{corollary}{Corollary}{Corollaries}
\crefname{remark}{Remark}{Remarks}
\crefname{definition}{Definition}{Definitions}

\numberwithin{equation}{section}

\newcommand{\C}{\mathbb{C}}
\newcommand{\R}{\mathbb{R}}
\newcommand{\M}{M_d}
\newcommand{\tr}{\operatorname{tr}}
\newcommand{\Tr}{\operatorname{Tr}}
\newcommand{\Rea}{\operatorname{Re}}
\newcommand{\id}{\operatorname{id}}
\newcommand{\spec}{\sigma}
\newcommand{\mmin}{m}
\newcommand{\smax}{s}
\newcommand{\cA}{\mathcal{A}}
\newcommand{\cT}{\mathcal{T}}
\newcommand{\cU}{\mathcal{U}}
\newcommand{\cC}{\mathcal{C}}
\newcommand{\cD}{\mathcal{D}}
\newcommand{\norm}[1]{\lVert #1\rVert}
\newcommand{\ket}[1]{\lvert #1\rangle}

\newcommand{\ketbra}[2]{\lvert #1\rangle\langle #2\rvert}
\newcommand{\eps}{\varepsilon}

\title{\textbf{Removing Trace Preservation from a Universal\\ Spectral Bound for 2-Positive Maps}\thanks{Preprint archived at \href{https://doi.org/10.5281/zenodo.23225994}{doi:10.5281/zenodo.23225994}.}}
\author{%
	Petri Tanninen\\[2pt]
	\small Sentient Machine Corporation / OCAD University\\
	\small \href{mailto:petri@sentientmachine.com}{\texttt{petri@sentientmachine.com}}
}
\date{October 2026}

\begin{document}

\maketitle

\begin{abstract}
Vom Ende, Chru\'sci\'nski, Kimura and Muratore-Ginanneschi proved that every
2-positive trace-preserving map $\Phi$ on $\M=M_d(\C)$ satisfies
$\Tr\Phi\le d\,\mmin(\Phi)+d^2-d$, where $\mmin(\Phi)$ is the smallest real part of
an eigenvalue of $\Phi$, and deduced the universal relaxation-rate bound
$\Tr L\le d\,\mmin(L)$ for generators of 2-positive trace-preserving semigroups.
They asked whether trace preservation is needed, and conjectured that every
conditionally 2-positive map $L$, i.e.\ every $L$ for which $e^{tL}$ is
2-positive for all $t\ge0$, satisfies
\[
  \Tr L\le d\,\mmin(L)+(d^2-d)\,\smax(L),
\]
where $\smax(L)$ is the largest real part of an eigenvalue of $L$.
We prove this conjecture. The argument normalizes a 2-positive map by a
positive definite eigenmatrix of its Hilbert--Schmidt adjoint, which turns it
into a 2-positive trace-preserving map in the manner of a quantum Doob
transform. The existence of such an eigenmatrix is guaranteed after adding an
arbitrarily small multiple of the unnormalized depolarizing map, and the
inequality then passes to the limit by continuity of the spectrum. No
irreducibility, diagonalizability or faithful-invariant-state assumption is
required. We also show that the constant $d$ is optimal within the class of
non-trace-preserving completely positive semigroups, and we recast the result as a
universal constraint on decay rates measured relative to the dominant spectral
growth rate.
\end{abstract}

\medskip
\noindent\textbf{Keywords:} 2-positive maps; conditionally completely positive maps;
quantum dynamical semigroups; relaxation rates; eigenvalue inequalities;
quantum Doob transform.

\smallskip
\noindent\textbf{MSC 2020:} 15A42, 15B48, 47D06, 81P47, 81Q10.

\section{Introduction}\label{sec:intro}

The eigenvalues of the generator of a Markovian quantum evolution determine its
relaxation rates, which are the experimentally accessible rates of decoherence,
dissipation and thermalization~\cite{BP02,AL07,RH12}. Complete positivity
constrains these rates in a nontrivial way. For a generator $L$ of a
completely positive trace-preserving (CPTP) semigroup on a $d$-level system, the
relaxation rates $\Gamma_j=-\Rea\lambda_j$ obey the universal bound
\begin{equation}\label{eq:rates-classical}
  \Gamma_k\le\frac1d\sum_{j}\Gamma_j \qquad\text{for every }k.
\end{equation}
This bound was conjectured in~\cite{CKKS21,CFKO21}, building on the qubit case
of~\cite{Kim02}, and proved in~\cite{MKC25} using Lyapunov-theoretic methods.

Vom Ende, Chru\'sci\'nski, Kimura and Muratore-Ginanneschi~\cite{vEC+26} later
gave an entirely algebraic proof that applies under the weaker hypothesis of
2-positivity. The key step in their argument is an inequality for maps rather than
generators (Theorem~A below): every 2-positive trace-preserving map $\Phi$
on $\M$ satisfies
\begin{equation}\label{eq:vEC-intro}
  \Tr\Phi\le d\,\mmin(\Phi)+d^2-d .
\end{equation}
Here $\mmin(\Phi)$ is the smallest real part of an eigenvalue of $\Phi$.
Inequality~\eqref{eq:rates-classical} then follows by differentiation along the
semigroup; see also~\cite{CvEKM25} for a direct proof at the generator level.

In the outlook of~\cite{vEC+26}, the authors note that trace preservation
enters the inequality mainly by fixing the dominant eigenvalue. That eigenvalue
equals $1$ for trace-preserving maps and $0$ for their generators. They ask
whether this hypothesis can be dropped once the dominant eigenvalue is
reinstated explicitly, and they formulate the following conjecture. Writing
$\smax(\cA)$ for the largest real part of an eigenvalue of $\cA$, and
calling a linear map $L$ on $\M$ \emph{conditionally 2-positive} if $e^{tL}$ is
2-positive for all $t\ge0$, the conjecture reads as follows.

\begin{conjecture}
Every conditionally 2-positive map $L\colon\M\to\M$ satisfies
\[
  \Tr L\le d\,\mmin(L)+(d^2-d)\,\smax(L).
\]
\end{conjecture}

The class of conditionally 2-positive maps contains every 2-positive map, every
completely positive map, and every conditionally completely positive map in the
sense of Evans~\cite{Eva77,WECC08}. The authors of~\cite{vEC+26} report numerical evidence in the
completely positive case. They also note that a proof would likely require a
substantially different approach, because their argument relies on the
existence of a fixed point, which is no longer available without trace
preservation.

\paragraph{Main result.}
We prove the conjecture.

\begin{theorem}\label{thm:main}
Let $L\colon\M\to\M$ be conditionally 2-positive. Then
\begin{equation}\label{eq:main}
  \Tr L\le d\,\mmin(L)+(d^2-d)\,\smax(L).
\end{equation}
Moreover, for each $d\ge2$, equality holds in~\eqref{eq:main} for a family of
generators of completely positive semigroups that are not trace-preserving.
In particular, in an inequality of the form
$\Tr L\le a\,\mmin(L)+(d^2-a)\,\smax(L)$, the coefficient $a=d$ cannot be
replaced by any larger number.
\end{theorem}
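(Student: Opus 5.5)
We deduce the generator inequality~\eqref{eq:main} from a map-level inequality. The plan is to show that every 2-positive map $\Phi$ on $\M$ satisfies
\begin{equation}\label{eq:plan-map}
  \Tr\Phi\le d\,\mmin(\Phi)+(d^2-d)\,\smax(\Phi),
\end{equation}
and then differentiate along the semigroup.

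\textbf{Map-level inequality, generic case.} Assume first that the Hilbert--Schmidt adjoint $\Phi^*$ is positive, so by Perron--Frobenius theory for positive maps its spectral radius $r=\smax(\Phi)$ is an eigenvalue with a positive semidefinite eigenmatrix $X$. Suppose $X>0$ and $r>0$. Define the Doob-type transform
\[
  \Psi(Y)=r^{-1}X^{1/2}\,\Phi\bigl(X^{-1/2}YX^{-1/2}\bigr)\,X^{1/2}.
\]
It is 2-positive, since it is a composition of $\Phi$ with congruences. It is also trace preserving, because
\[
  \tr\Psi(Y)=r^{-1}\tr\bigl(X\Phi(X^{-1/2}YX^{-1/2})\bigr)=r^{-1}\tr\bigl(\Phi^*(X)X^{-1/2}YX^{-1/2}\bigr)=\tr Y.
\]
Because $\Psi$ is similar to $r^{-1}\Phi$, we have $\Tr\Psi=r^{-1}\Tr\Phi$ and $\mmin(\Psi)=r^{-1}\mmin(\Phi)$. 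Theorem~A then gives $\Tr\Phi\le d\,\mmin(\Phi)+(d^2-d)r$.

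\textbf{Removing the genericity assumptions.} For general 2-positive $\Phi$, let $\Delta(Y)=\tr(Y)\,\mathbf 1$ be the unnormalized depolarizing map and set $\Phi_\eps=\Phi+\eps\Delta$, which is 2-positive. The map $\Phi_\eps^*=\Phi^*+\eps\Delta$ sends every nonzero $Y\ge0$ to a positive definite matrix. Any Perron eigenmatrix $X\ge0$ of $\Phi_\eps^*$ therefore satisfies $rX=\Phi_\eps^*(X)>0$. This forces $r>0$ and $X>0$, so the generic case applies to $\Phi_\eps$. Eigenvalues depend continuously on the map, hence $\mmin$, $\smax$ and $\Tr$ are continuous, and letting $\eps\to0$ yields~\eqref{eq:plan-map} for every 2-positive $\Phi$.

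\textbf{From maps to generators.} Apply~\eqref{eq:plan-map} to $\Phi_t=e^{tL}$. Its eigenvalues are $e^{t\lambda}$ with $\lambda\in\spec(L)$, so
\[
  \mmin(\Phi_t)=\min_\lambda e^{t\Rea\lambda}\cos(t\,\mathrm{Im}\lambda)=1+t\,\mmin(L)+O(t^2)
\]
for small $t$; the cosine factor only contributes at order $t^2$. Likewise $\smax(\Phi_t)=e^{t\smax(L)}$, and $\Tr\Phi_t=d^2+t\Tr L+O(t^2)$. At $t=0$ both sides of~\eqref{eq:plan-map} equal $d^2$. Subtracting this common value, dividing by $t$ and letting $t\downarrow0$ gives~\eqref{eq:main}. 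I expect the main obstacle to be the step above: checking that the Perron eigenmatrix is positive definite after the $\eps$-perturbation, and handling the trivial case $r=0$.

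\textbf{Sharpness.} For the equality case, take $L(Y)=\ketbra{1}{1}Y\ketbra{1}{1}-\mu Y$, which generates a CP, non-trace-preserving semigroup. Its spectrum is as follows:
\begin{itemize}
  \item $1-\mu$ once, on the matrix unit $E_{11}$;
  \item $-\mu$ with multiplicity $d^2-1$, on the other matrix units.
\end{itemize}
This is not an equality case: it gives $\Tr L=1-d^2\mu$ against a right-hand side of $d^2(1-\mu)-d$, and the condition $1-d^2\mu\le d^2-d^2\mu-d$ holds strictly for $d\ge2$. A better choice is a rescaled (non-trace-preserving) version of a known extremal CPTP generator for Theorem~A, such as the dephasing-type generator in~\cite{vEC+26} that attains $\Tr L=d\,\mmin(L)$. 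Adding $cI$ to such a generator shifts $\Tr$ by $d^2c$, $\mmin$ by $c$ and $\smax$ by $c$. Both sides of~\eqref{eq:main} therefore move by $d^2c$, so equality is preserved while trace preservation is lost for $c\neq0$.

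For optimality of the coefficient, take an equality case with $\mmin<\smax$. For $a>d$, the inequality $\Tr L\le a\,\mmin(L)+(d^2-a)\,\smax(L)$ would require $d\,\mmin+(d^2-d)\smax\le a\,\mmin+(d^2-a)\smax$, that is, $(a-d)(\smax-\mmin)\le0$, which fails.
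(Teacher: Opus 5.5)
Your proof of the inequality~\eqref{eq:main} is correct and follows the paper's route: normalize by a positive definite eigenmatrix of $\Phi_\eps^*$ to reach Theorem~A, perturb by $\Phi_\eps=\Phi+\eps\Delta$, use continuity of $\Tr$, $\mmin$ and $\smax$, and expand along $e^{tL}$ to first order. The only difference is the source of the eigenmatrix and of $r=\smax(\Phi)$. You take both from the Perron--Frobenius theorem for positive maps. The paper instead obtains the eigenmatrix from Brouwer's fixed-point theorem on density matrices, and reads off $\smax(\Phi)=r$ from \cref{lem:spectrum} applied to the normalized map. Either works.

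The gap is in the sharpness part, which belongs to the statement. The shift $L_0\mapsto L_0+c\,\id$ is the right mechanism. However, it needs, for every $d\ge2$, an explicit CPTP generator $L_0$ with $\Tr L_0=d\,\mmin(L_0)$ and $\mmin(L_0)<0$. You neither write one down nor check one, and pointing to ``the dephasing-type generator in~\cite{vEC+26}'' does not establish it.

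The obvious dephasing candidates in fact fail for $d\ge3$:
\begin{itemize}
\item complete dephasing $X\mapsto\sum_k\ketbra{k}{k}X\ketbra{k}{k}-X$ has trace $-(d^2-d)$ against $d\,\mmin=-d$;
\item $X\mapsto PXP-\tfrac12(PX+XP)$ with $P=\ketbra{1}{1}$ has trace $-(d-1)$ against $d\,\mmin=-d/2$.
\end{itemize}
Both give equality only when $d=2$.

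The paper supplies the missing example, the amplitude-damping generator $G(X)=VXV^\dagger-\tfrac12(PX+XP)$ with $V=\ketbra{0}{1}$ and $P=\ketbra{1}{1}$. Its eigenvalues are:
\begin{itemize}
\item $0$, with multiplicity $(d-1)^2$, on $E_{ij}$ with $i,j\neq1$;
\item $-\tfrac12$, with multiplicity $2(d-1)$, on $E_{1j},E_{j1}$ with $j\neq1$;
\item $-1$, once, on $E_{11}-E_{00}$.
\end{itemize}
Hence $\Tr G=-d=d\,\mmin(G)$ and $\smax(G)=0$. The semigroup $e^{tG}$ is CPTP with Kraus operators $K_0(t)=I-P+e^{-t/2}P$ and $K_1(t)=\sqrt{1-e^{-t}}\,V$. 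Then $L_c=G+c\,\id$ with $c\neq0$ is the required non-trace-preserving family. It satisfies $\smax(L_c)-\mmin(L_c)=1$, which is exactly what your optimality argument needs.

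A dephasing example does exist, for instance $X\mapsto WXW-\tfrac12(W^2X+XW^2)$ with $W=\tfrac12(\ketbra{0}{0}-\ketbra{d-1}{d-1})$. Its nonzero eigenvalues are $-\tfrac12$ twice and $-\tfrac18$ with multiplicity $4(d-2)$, so its trace is $-d/2=d\,\mmin$. Still, some such example has to be exhibited and verified for your argument to be complete.
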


The proof proceeds through the corresponding statement for maps.

\begin{theorem}\label{thm:maps}
Every 2-positive map $\Phi\colon\M\to\M$ satisfies
\begin{equation}\label{eq:maps}
  \Tr\Phi\le d\,\mmin(\Phi)+(d^2-d)\,\smax(\Phi).
\end{equation}
\end{theorem}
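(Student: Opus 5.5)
We will reduce \cref{thm:maps} to Theorem~A by a similarity transformation of Doob type. Let $\Phi$ be 2-positive and put $r=\smax(\Phi)$. The spectrum of $\Phi$ equals that of its Hilbert--Schmidt adjoint $\Phi^*$, and $\Phi^*$ is again 2-positive, hence positive. First suppose that $\Phi^*$ has a positive definite eigenmatrix $X>0$, say $\Phi^*(X)=\rho X$. Since $\Phi^*$ is positive, the Perron--Frobenius theory for positive maps on $\M$ (Evans--H{\o}egh-Krohn) gives that the spectral radius of $\Phi^*$ is an eigenvalue with a positive semidefinite eigenvector. We would like $\rho$ to be this spectral radius and to equal $\smax(\Phi)$. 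To see this, note that an eigenvector $X>0$ makes $\Phi^*$ similar to a unital map, so every eigenvalue $\lambda$ satisfies $|\lambda|\le\rho$. Hence $\rho$ is the spectral radius, and $\smax(\Phi)=\rho$, since $\Rea\lambda\le|\lambda|\le\rho$ and $\rho$ itself is an eigenvalue.

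Next we define
\[
\Psi(Y)=\rho^{-1}X^{1/2}\,\Phi\bigl(X^{-1/2}YX^{-1/2}\bigr)\,X^{1/2}.
\]
This map is 2-positive, being a composition of $\Phi$ with two congruences, and it is trace preserving:
\[
\tr\Psi(Y)=\rho^{-1}\tr\bigl(X\,\Phi(X^{-1/2}YX^{-1/2})\bigr)=\rho^{-1}\tr\bigl(\Phi^*(X)X^{-1/2}YX^{-1/2}\bigr)=\tr Y.
\]
Also, $\Psi$ is similar to $\rho^{-1}\Phi$. It follows that $\Tr\Psi=\rho^{-1}\Tr\Phi$ and $\mmin(\Psi)=\rho^{-1}\mmin(\Phi)$, using $\rho>0$. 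Applying Theorem~A to $\Psi$ and multiplying through by $\rho=\smax(\Phi)$ gives \eqref{eq:maps}. The degenerate case $\rho=0$ cannot occur when $X>0$ exists with $\Phi\ne0$ in the perturbed setting below, because there $\rho>0$ holds automatically.

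For the general case we perturb. Let $\cD(Y)=\tr(Y)\,\mathbf 1$ be the unnormalized depolarizing map, which is completely positive, and put $\Phi_\eps=\Phi+\eps\cD$ for $\eps>0$. This map is 2-positive, and $\Phi_\eps^*=\Phi^*+\eps\cD$ is strictly positive: it sends every nonzero $Y\ge0$ to a positive definite matrix. By Perron--Frobenius for positive maps, the spectral radius $\rho_\eps$ of $\Phi_\eps^*$ is an eigenvalue with eigenvector $X_\eps\ge0$, $X_\eps\ne0$. Then $\rho_\eps X_\eps=\Phi_\eps^*(X_\eps)>0$, so $\rho_\eps>0$ and $X_\eps>0$. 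The first case therefore yields
\[
\Tr\Phi_\eps\le d\,\mmin(\Phi_\eps)+(d^2-d)\,\smax(\Phi_\eps).
\]
As $\eps\to0$, $\Tr\Phi_\eps=\Tr\Phi+\eps d$ converges to $\Tr\Phi$. The eigenvalues of $\Phi_\eps$ depend continuously on $\eps$ as a multiset, so $\mmin$ and $\smax$, being the minimum and maximum of continuous functions of the unordered eigenvalue tuple, converge as well. Passing to the limit gives \eqref{eq:maps}.

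The main obstacle is the existence of a \emph{positive definite} dual eigenmatrix, which is what the strictly positive perturbation is designed to supply. One point needs care: the Perron eigenvector of a merely positive map need not be definite, and I must check the version of Perron--Frobenius used, namely that a positive map on $\M$ has its spectral radius as an eigenvalue with a PSD eigenvector. I would prove this via the resolvent, or by applying Brouwer's fixed-point theorem to $Y\mapsto\Phi^*(Y)/\tr\Phi^*(Y)$ on density matrices. This argument is valid once $\Phi^*$ is strictly positive, which holds here, so only that version is needed.
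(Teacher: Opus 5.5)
Your proposal is correct and follows essentially the same route as the paper: a Doob-type normalization by a positive definite eigenmatrix of $\Phi^*$ (your similarity to a positive unital map for $\smax(\Phi)=\rho$ is the adjoint form of the paper's \cref{lem:spectrum} argument), the perturbation $\Phi+\eps\,\tr(\cdot)\,I$, and continuity of the spectrum. You do not need the full Evans--H{\o}egh-Krohn theorem: the Brouwer fixed point you mention, which is what the paper uses, already gives $\Phi_\eps^*(X_\eps)=r_\eps X_\eps$ with $r_\eps>0$ and $X_\eps\ge(\eps/r_\eps)I$, and your first case then shows that $r_\eps$ is the spectral radius.
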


The difficulty identified in~\cite{vEC+26} is real. A general 2-positive map has
no distinguished fixed point, its dominant eigenvalue need not be $1$, and it
may be reducible or nilpotent. Our approach is to reduce to Theorem~A instead of
reproving it. If the Hilbert--Schmidt adjoint $\Phi^*$ has a \emph{positive
definite} eigenmatrix $H$ with eigenvalue $r>0$, then conjugation by $H^{1/2}$
followed by division by $r$ produces a 2-positive \emph{trace-preserving} map
with the same spectrum up to the factor $r$. This normalization is a version of
the quantum Doob transform used in the large-deviation theory of open quantum
systems~\cite{CGLP18}. A positive definite eigenmatrix need not exist, but it does
exist after adding $\eps\,\tr(\cdot)\,I$ to $\Phi$, by Brouwer's fixed-point
theorem. The inequality is stable as $\eps\downarrow0$ because it involves only
the trace and the extreme real parts of the spectrum, all of which depend
continuously on the map. In particular, the normalizing matrices $H_\eps$ do not
need to converge, so degenerate limits cause no difficulty.

Theorem~\ref{thm:main} has a natural interpretation in terms of decay rates. Set
$\Gamma_j=\smax(L)-\Rea\lambda_j\ge0$, the decay rates measured relative to the
dominant growth rate. Then~\eqref{eq:main} is equivalent to
$\max_k\Gamma_k\le\frac1d\sum_j\Gamma_j$
(\cref{cor:rates}), which extends~\eqref{eq:rates-classical} verbatim to
non-trace-preserving dynamics. Such dynamics arise, for instance, as tilted
generators in the study of quantum-trajectory fluctuations~\cite{CGLP18}.

\paragraph{Organization.}
\Cref{sec:prelim} fixes notation and records the input from~\cite{vEC+26}.
\Cref{sec:spectral} contains a standard spectral lemma for positive
trace-preserving maps. \Cref{sec:normalization} proves \cref{thm:maps} in the
presence of a positive definite eigenmatrix, and \cref{sec:general} removes that
assumption. \Cref{sec:generators} deduces \cref{thm:main}. \Cref{sec:sharp}
establishes sharpness, and \cref{sec:rates} discusses the decay-rate formulation.

\section{Notation and preliminaries}\label{sec:prelim}

Throughout, $d$ is a positive integer and all maps are complex-linear.
Let $\M=M_d(\C)$ with the Hilbert--Schmidt inner product
$\langle A,B\rangle=\tr(A^\dagger B)$. For a linear map $\cA\colon\M\to\M$,
$\cA^*$ denotes its Hilbert--Schmidt adjoint. We write $\tr$ for the trace of a
$d\times d$ matrix and $\Tr\cA$ for the trace of $\cA$ as an operator on the
$d^2$-dimensional space $\M$. Thus $\Tr\cA=\sum_{\lambda}\lambda$, with eigenvalues
counted according to algebraic multiplicity. The spectrum of $\cA$ is
$\spec(\cA)$, and we set
\[
  \mmin(\cA)=\min_{\lambda\in\spec(\cA)}\Rea\lambda,
  \qquad
  \smax(\cA)=\max_{\lambda\in\spec(\cA)}\Rea\lambda .
\]
For matrices, $X\ge0$ means positive semidefinite and $X>0$ means positive definite.
The operator norm is $\norm{\cdot}_\infty$.

\begin{definition}
A linear map $\Phi\colon\M\to\M$ is \emph{positive} if $\Phi(X)\ge0$ whenever
$X\ge0$. It is \emph{$k$-positive} if $\id_{M_k}\otimes\Phi$ is positive on
$M_k\otimes\M$, and \emph{completely positive} if it is $k$-positive for every
$k$. It is \emph{trace-preserving} if $\tr\Phi(X)=\tr X$ for all $X$. A linear map
$L\colon\M\to\M$ is \emph{conditionally 2-positive} if $e^{tL}$ is 2-positive for
every $t\ge0$.
\end{definition}

We record some elementary facts that will be used repeatedly.

\begin{enumerate}
\renewcommand{\theenumi}{(P\arabic{enumi})}\renewcommand{\labelenumi}{\theenumi}
\item \label{P1}
  \emph{Hermiticity and real traces.} A positive map is Hermitian-preserving,
  since every Hermitian $X$ is a difference of two positive semidefinite matrices.
  If $e^{tL}$ is Hermitian-preserving for all $t\ge0$, then so is
  $L=\lim_{t\downarrow0}t^{-1}(e^{tL}-\id)$. A Hermitian-preserving map is
  represented by a real matrix in any basis of Hermitian matrices. Hence
  its characteristic polynomial has real coefficients, its spectrum is
  invariant under complex conjugation, and its trace is real. In
  particular, all quantities appearing in~\eqref{eq:main} and~\eqref{eq:maps}
  are real.
\item \label{P2}
  \emph{Positivity of the adjoint.} If $\Phi$ is positive, then so is $\Phi^*$.
  Indeed, for $Y\ge0$ and $v\in\C^d$ we have
  $\langle v,\Phi^*(Y)v\rangle=\tr\bigl(Y\,\Phi(vv^\dagger)\bigr)\ge0$, because the
  trace of a product of two positive semidefinite matrices is nonnegative.
\item \label{P3}
  \emph{Congruences.} For any $S\in\M$, the map $X\mapsto SXS^\dagger$ is completely
  positive, since its $k$-fold amplification is congruence by $I_k\otimes S$.
  The 2-positive maps form a convex cone that is closed under composition and
  topologically closed.
\item \label{P4}
  \emph{Continuity of the spectrum.} The eigenvalues of a matrix, counted with
  algebraic multiplicity, depend continuously on its entries as an unordered
  $d^2$-tuple, because the roots of a monic polynomial depend continuously on its
  coefficients~\cite[Ch.~II, Thm.~5.14]{Kat80}. Consequently, $\cA\mapsto\mmin(\cA)$,
  $\cA\mapsto\smax(\cA)$ and $\cA\mapsto\Tr\cA$ are continuous.
\end{enumerate}

The only specialized result we use is the main theorem of~\cite{vEC+26}.

\begin{theoremA}[{\cite[Thm.~1]{vEC+26}}]
If $\cT\colon\M\to\M$ is 2-positive and trace-preserving, then
\begin{equation}\label{eq:thmA}
  \Tr\cT\le d\,\mmin(\cT)+d^2-d .
\end{equation}
\end{theoremA}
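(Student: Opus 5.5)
The statement immediately above is Theorem~A, which the paper imports from~\cite{vEC+26}. If I had to prove it myself, I would work with the Hilbert--Schmidt adjoint $\cT^*$. It is \emph{unital} and 2-positive, and $\spec(\cT^*)=\overline{\spec(\cT)}$, so $\Tr\cT^*=\Tr\cT$ and $\mmin(\cT^*)=\mmin(\cT)$. The decisive tool is the Kadison--Schwarz inequality, valid for unital 2-positive maps:
\[
\cT^*(Y)^\dagger\cT^*(Y)\le\cT^*(Y^\dagger Y)\qquad\text{and}\qquad \cT^*(Y)\cT^*(Y)^\dagger\le\cT^*(YY^\dagger).
\]
The plan is to evaluate $\Tr\cT^*=\sum_j\langle B_j,\cT^*(B_j)\rangle$ in a well-chosen orthonormal basis $\{B_j\}$ of $\M$. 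I would split it into a block of $d$ basis vectors whose total contribution is at most $d\,\mmin$, and a complementary block of $d^2-d$ vectors each contributing real part at most $1$.

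For the second block, the bound comes from Kadison--Schwarz. Using $\tr\cT^*(Z)=\langle\cT(I),Z\rangle$ and Cauchy--Schwarz, for any $B$ with $\norm{B}_2=1$ we get
\[
\Rea\langle B,\cT^*(B)\rangle\le\bigl(\tr B^\dagger B\bigr)^{1/2}\bigl(\tr \cT^*(B^\dagger B)\bigr)^{1/2}.
\]
The right-hand side equals $\langle\cT(I),B^\dagger B\rangle^{1/2}$. Summed over a block, this is controlled by $\tr\cT(I)\cdot(\text{stuff})$. Since $\cT$ is trace-preserving, $\tr\cT(I)=d$, and averaging over a unitary orbit gives an average contribution $\le1$ per vector. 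Concretely, I would use a Weyl (clock-and-shift) unitary basis $\{U_k/\sqrt d\}$. For unitary $U$ it gives $|\tr U^\dagger\cT^*(U)|\le d$, so each normalized term has real part $\le1$.

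For the first block I take an eigenvector $\cT^*(X)=\mu X$ with $\Rea\mu=\mmin$. I then want a $d$-dimensional subspace $W\subset\M$ on which the compression of $\cT^*$ has trace with real part $\le d\,\mmin+(\text{correction})$. The correction must be absorbed by the complementary block. The natural candidate is $W=\{XD\}$ or $W=\mathrm{span}\{X P_i\}$, with $P_i$ the spectral projections of $|X|$ in the polar decomposition $X=V|X|$. Alternatively one can twist the Weyl basis by $V$, namely $B_k=V U_k/\sqrt d$ with $U_k$ diagonal. Kadison--Schwarz applied to $X$ gives $|\mu|^2X^\dagger X\le\cT^*(X^\dagger X)$, which links the spectral data of $X$ to the action of $\cT^*$ on the commutative algebra generated by $|X|$.

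The main obstacle is exactly this first block. Kadison--Schwarz gives one-sided quadratic control, but converting the eigenvalue information on the single vector $X$ into a bound on the trace of a $d$-dimensional compression requires a clever, nonunitary-invariant choice of basis. It also requires an inequality that trades the deficit $1-\Rea\mu$ on $W$ against the slack in the complementary block. What I would try first is to reduce to the case where $X$ is normal via the Hermitian/anti-Hermitian parts, using \ref{P1}. In that case one can simultaneously diagonalize and use the diagonal Weyl unitaries $\{\sum_i\omega^{ik}P_i\}$. Applying the second Schwarz inequality to $X^\dagger$ should then give a matching bound for the complement.
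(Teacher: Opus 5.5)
\textbf{The proposal has a genuine gap.} The paper gives no proof of Theorem~A: it quotes the result from \cite{vEC+26} and uses it as a black box. Your sketch therefore has to stand on its own, and it does not. The step that produces the coefficient $d$ is missing, as you concede, and two supporting steps fail.

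First, the bound $\Rea\langle B,\cT^*(B)\rangle\le1$ holds for $B=U/\sqrt d$ with $U$ unitary, but not for general unit vectors. Kadison--Schwarz only gives $\norm{\cT^*(B)}_2^2\le\tr(\cT(I)B^\dagger B)$, and $\cT(I)\ne I$ unless $\cT$ is unital. For $\cT(Y)=\tr(Y)\ketbra{0}{0}$ one has $\cT^*(Z)=\langle0|Z|0\rangle I$. Then $B=\cos\theta\,E_{00}+\sin\theta\,E_{11}$ gives $\cos^2\theta+\sin\theta\cos\theta$, whose maximum $(1+\sqrt2)/2$ exceeds $1$.
\begin{itemize}
\item If your first block contains non-unitary vectors adapted to $X$, the complement is no longer spanned by unitaries and loses its bound.
\item If you keep an all-unitary basis such as $VU_k/\sqrt d$, nothing makes the $d$ distinguished terms small. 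Take the amplitude-damping map $\cT=e^{tG}$ of \cref{prop:sharp} with $X=E_{11}$ (so $\cT^*(X)=e^{-t}X$), $V=I$, and $W$ the diagonal matrices. Then these terms sum to $d-1+e^{-t}>d\,e^{-t}=d\,\mmin(\cT)$ for $t>0$. Here $\{XD\}$ and $\{XP_i\}$ are not even $d$-dimensional.
\end{itemize}
The additive trade-off between $W$ and its complement that you would need is the whole content of Theorem~A.

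Second, the reduction to normal $X$ is invalid for non-real $\mu$. Then $X\pm X^\dagger$ are not eigenvectors, and a normal eigenvector need not exist. For $\cT(Y)=UYU^\dagger$ on $M_2$ with $U=\operatorname{diag}(1,e^{i\theta})$ and $0<\theta<\pi$, the value $\mmin(\cT)=\cos\theta$ is attained only at $e^{\pm i\theta}$. Their eigenspaces are spanned by the nilpotent $E_{10}$ and $E_{01}$. This map even attains equality in~\eqref{eq:thmA}.

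What you are missing is where 2-positivity actually produces the factor $d$. Here is one self-contained route; I do not claim it is the argument of \cite{vEC+26}. In any orthonormal basis $\{e_i\}$ of $\C^d$, put $q_{ki}=\langle e_k|\cT^*(\ketbra{e_i}{e_i})|e_k\rangle\ge0$. Apply 2-positivity of $\cT^*$ to the positive block matrix $\bigl[\begin{smallmatrix}\ketbra{e_i}{e_i}&\ketbra{e_i}{e_j}\\ \ketbra{e_j}{e_i}&\ketbra{e_j}{e_j}\end{smallmatrix}\bigr]$. This gives $\lvert\langle e_i|\cT^*(\ketbra{e_i}{e_j})|e_j\rangle\rvert^2\le q_{ii}q_{jj}$. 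Hence $\Tr\cT=\Tr\cT^*=\sum_{i,j}\langle e_i|\cT^*(\ketbra{e_i}{e_j})|e_j\rangle\le(\sum_i\sqrt{q_{ii}})^2\le d\sum_iq_{ii}$. So your diagonal block $W=\operatorname{span}\{\ketbra{e_i}{e_i}\}$ controls everything else multiplicatively, with no Weyl bookkeeping.
\begin{itemize}
\item If $\mu$ is real, take $X$ Hermitian and $\{e_i\}$ its eigenbasis. Then $q$ is row-stochastic, since $\cT^*(I)=I$. Also $qx=\mu x$, where $x$ is the vector of eigenvalues of $X$. Therefore $\sum_iq_{ii}\le d-1+\mu$, and~\eqref{eq:thmA} follows.
\item For non-real $\mu$ you need a genuine reduction. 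The paper indicates that \cite{vEC+26} relies on a faithful fixed point and an $\omega$-weighted symmetrization. For instance, replace $\cT$ by $(1-\eps)\cT+\eps\tr(\cdot)I/d$, which is harmless by~\ref{P4} and has a fixed point $\omega\ge\eps I/d$. Use the inner product $\langle A,B\rangle_\omega=\tr(\omega^{1/2}A^\dagger\omega^{1/2}B)$, in which the adjoint of $\cT^*$ is $A\mapsto\omega^{-1/2}\cT(\omega^{1/2}A\omega^{1/2})\omega^{-1/2}$. The average of $\cT^*$ and this adjoint is unital, 2-positive and self-adjoint, with the same trace. By the Rayleigh quotient at $X$, its smallest eigenvalue is at most $\Rea\mu$, so the real case applies to it.
\end{itemize}
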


\section{A spectral lemma for positive trace-preserving maps}\label{sec:spectral}

The following fact is well known; see, e.g., \cite[Prop.~4.26]{Wat18}. We include a
short proof for completeness.

\begin{lemma}\label[lemma]{lem:spectrum}
Let $\cT\colon\M\to\M$ be positive and trace-preserving. Then $1\in\spec(\cT)$ and
$\lvert\lambda\rvert\le1$ for every $\lambda\in\spec(\cT)$. In particular,
$\smax(\cT)=1$.
\end{lemma}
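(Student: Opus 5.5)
We must prove \cref{lem:spectrum}: a positive trace-preserving $\cT$ has $1\in\spec(\cT)$, every eigenvalue satisfies $\lvert\lambda\rvert\le1$, and hence $\smax(\cT)=1$. The plan is to pass to the adjoint $\cT^*$, which is unital and positive, and to work with the operator norm there. This is convenient because positive unital maps are contractive on Hermitian matrices.

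\emph{Step 1: $1$ is an eigenvalue.} Trace preservation means $\langle I,\cT(X)\rangle=\langle I,X\rangle$ for all $X$. Equivalently, $\cT^*(I)=I$. So $1\in\spec(\cT^*)$. Since $\spec(\cT^*)$ is the complex conjugate of $\spec(\cT)$, we get $1=\bar1\in\spec(\cT)$.

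\emph{Step 2: the spectral radius is at most $1$.} By \ref{P2}, $\cT^*$ is positive, and it is unital.

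First consider Hermitian $Y$. We have $-\norm{Y}_\infty I\le Y\le\norm{Y}_\infty I$. Applying $\cT^*$ gives $-\norm{Y}_\infty I\le\cT^*(Y)\le\norm{Y}_\infty I$, so $\norm{\cT^*(Y)}_\infty\le\norm{Y}_\infty$.

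Now take an arbitrary $Y=A+iB$ with $A,B$ Hermitian. Then $\cT^*(Y)=\cT^*(A)+i\cT^*(B)$, because $\cT^*$ is Hermitian-preserving by \ref{P1}. Hence
\[
\norm{(\cT^*)^n(Y)}_\infty\le\norm{A}_\infty+\norm{B}_\infty\le2\norm{Y}_\infty
\]
for all $n$. The powers $(\cT^*)^n$ are therefore uniformly bounded. Suppose $\cT^*Y=\mu Y$ with $Y\ne0$ and $\lvert\mu\rvert>1$. Then $\norm{(\cT^*)^nY}_\infty=\lvert\mu\rvert^n\norm{Y}_\infty$ grows without bound, which is a contradiction. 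So $\lvert\mu\rvert\le1$ on $\spec(\cT^*)$, and by conjugation also on $\spec(\cT)$.

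\emph{Step 3: conclusion.} Every $\lambda\in\spec(\cT)$ satisfies $\Rea\lambda\le\lvert\lambda\rvert\le1$. Since $1\in\spec(\cT)$, this gives $\smax(\cT)=1$.

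The only mildly delicate point is Step 2. The order inequality $-\norm{Y}_\infty I\le Y$ is available only for Hermitian $Y$, while the eigenvector of $\cT^*$ may be non-Hermitian. The splitting into Hermitian and anti-Hermitian parts, combined with uniform boundedness of the powers rather than a sharp contraction estimate, handles this.
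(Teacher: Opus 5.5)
Your proposal is correct and follows essentially the same route as the paper: pass to the positive unital adjoint, get the contraction estimate on Hermitian matrices from $-\norm{A}_\infty I\le A\le\norm{A}_\infty I$, then split a general eigenvector into Hermitian and anti-Hermitian parts to obtain $\norm{(\cT^*)^n(Y)}_\infty\le2\norm{Y}_\infty$ and hence $\lvert\mu\rvert\le1$. One small wording point: the identity $\cT^*(A+iB)=\cT^*(A)+i\cT^*(B)$ is just linearity, and Hermitian preservation is what you actually need to iterate the Hermitian contraction estimate to the powers $(\cT^*)^n$; the paper instead notes directly that each power $(\cT^*)^n$ is itself positive and unital.
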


\begin{proof}
Let $\cU=\cT^*$. By~\ref{P2}, $\cU$ is positive. Trace preservation of $\cT$ is
equivalent to $\cU(I)=I$, so $1\in\spec(\cU)$. Since $\spec(\cT)=\overline{\spec(\cU)}$,
it follows that $1\in\spec(\cT)$.

For Hermitian $A$ we have $-\norm{A}_\infty I\le A\le\norm{A}_\infty I$. Since each
power $\cU^n$ is positive and unital, this gives
$\norm{\cU^n(A)}_\infty\le\norm{A}_\infty$. Now write an arbitrary $X\in\M$ as $X=A+iB$ with
$A=\tfrac12(X+X^\dagger)$ and $B=\tfrac1{2i}(X-X^\dagger)$ Hermitian. Then
$\norm{A}_\infty,\norm{B}_\infty\le\norm{X}_\infty$, and therefore
$\norm{\cU^n(X)}_\infty\le2\norm{X}_\infty$. If $\cU(X)=\lambda X$ with $X\ne0$, it
follows that $\lvert\lambda\rvert^n\le2$ for all $n\in\mathbb{N}$, so
$\lvert\lambda\rvert\le1$. Passing to complex conjugates proves the claim for $\cT$.
\end{proof}

\section{Normalization by a positive definite eigenmatrix}\label{sec:normalization}

\begin{lemma}\label[lemma]{lem:normalization}
Let $\Phi\colon\M\to\M$ be 2-positive, and suppose that there are a positive
definite $H\in\M$ and a real number $r>0$ such that
\begin{equation}\label{eq:eigenmatrix}
  \Phi^*(H)=rH .
\end{equation}
Then $\smax(\Phi)=r$, and $\Phi$ satisfies~\eqref{eq:maps}.
\end{lemma}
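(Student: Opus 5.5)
We define the normalized map
\[
  \cT(X)=\frac1r\,H^{1/2}\,\Phi\bigl(H^{-1/2}XH^{-1/2}\bigr)\,H^{1/2},
\]
that is, $\cT=\frac1r\,\cC_{H^{1/2}}\circ\Phi\circ\cC_{H^{-1/2}}$, where $\cC_S(X)=SXS^\dagger$.

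\textbf{2-positivity.} By~\ref{P3}, each congruence $\cC_S$ is completely positive. The 2-positive maps form a cone closed under composition and under multiplication by $1/r>0$. Hence $\cT$ is 2-positive.

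\textbf{Trace preservation.} For any $X$ we compute
\[
  \tr\cT(X)=\tfrac1r\tr\bigl(H\,\Phi(H^{-1/2}XH^{-1/2})\bigr)
  =\tfrac1r\bigl\langle \Phi^*(H),H^{-1/2}XH^{-1/2}\bigr\rangle .
\]
The second equality uses that $H$ is Hermitian, so $\tr(HY)=\langle H,Y\rangle$. By~\eqref{eq:eigenmatrix}, $\Phi^*(H)=rH$, so the right-hand side equals $\tr\bigl(H H^{-1/2}XH^{-1/2}\bigr)=\tr X$. Thus $\cT$ is trace-preserving.

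\textbf{Spectrum.} The map $\cT$ is $\frac1r$ times a similarity transform of $\Phi$, since $\cC_{H^{-1/2}}=\cC_{H^{1/2}}^{-1}$. Hence $\spec(\cT)=\frac1r\spec(\Phi)$ with algebraic multiplicities preserved, and $\Tr\cT=\frac1r\Tr\Phi$. Because $r>0$, real parts scale by $1/r$ and their order is kept. This gives
\[
  \mmin(\cT)=\mmin(\Phi)/r,\qquad \smax(\cT)=\smax(\Phi)/r.
\]

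\textbf{Conclusion.} The map $\cT$ is positive and trace-preserving, so \cref{lem:spectrum} gives $\smax(\cT)=1$, hence $\smax(\Phi)=r$. Theorem~A applied to $\cT$ gives
\[
  \tfrac1r\Tr\Phi\le \tfrac dr\,\mmin(\Phi)+d^2-d .
\]
Multiplying by $r>0$ yields $\Tr\Phi\le d\,\mmin(\Phi)+(d^2-d)r$. Since $r=\smax(\Phi)$, this is exactly~\eqref{eq:maps}.

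The argument is routine. The one step needing care is the adjoint computation for trace preservation, where the Hermitian-ness of $H$ and the placement of the conjugations must match the convention $\langle A,B\rangle=\tr(A^\dagger B)$.
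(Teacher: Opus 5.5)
Your proposal is correct and follows the paper's proof essentially step by step. It uses the same normalization $\cT=\frac1r\,H^{1/2}\Phi(H^{-1/2}\,\cdot\,H^{-1/2})H^{1/2}$, the same adjoint computation for trace preservation, spectral scaling under similarity, \cref{lem:spectrum} to get $\smax(\Phi)=r$, and then Theorem~A. The only cosmetic difference is that you read off $\smax(\cT)=1$ directly from the lemma, whereas the paper also notes that $r$ is an eigenvalue of $\Phi$ dominating all eigenvalues in modulus.
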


\begin{proof}
Let $\cC_H(X)=H^{1/2}XH^{1/2}$. This map is invertible, and its inverse is
$\cC_H^{-1}(X)=H^{-1/2}XH^{-1/2}$. Define
\begin{equation}\label{eq:doob}
  \cT=\frac1r\,\cC_H\circ\Phi\circ\cC_H^{-1}.
\end{equation}
By~\ref{P3}, $\cC_H$ and $\cC_H^{-1}$ are completely positive. Since $r>0$, $\cT$ is
2-positive. It is also trace-preserving: for every $X\in\M$, using cyclicity of
the trace, the Hermiticity of $H$ and~\eqref{eq:eigenmatrix},
\begin{align*}
  \tr\cT(X)
  &=\frac1r\tr\Bigl[H\,\Phi\bigl(H^{-1/2}XH^{-1/2}\bigr)\Bigr]
   =\frac1r\tr\Bigl[\Phi^*(H)\,H^{-1/2}XH^{-1/2}\Bigr]\\
  &=\tr\Bigl[H\,H^{-1/2}XH^{-1/2}\Bigr]
   =\tr X .
\end{align*}
Because~\eqref{eq:doob} is a similarity transformation followed by multiplication
by $r^{-1}$, we have
\begin{equation}\label{eq:scaling}
  \spec(\cT)=r^{-1}\spec(\Phi),\qquad
  \Tr\cT=r^{-1}\Tr\Phi,\qquad
  \mmin(\cT)=r^{-1}\mmin(\Phi),
\end{equation}
with algebraic multiplicities preserved. By \cref{lem:spectrum}, $1\in\spec(\cT)$ and
every eigenvalue of $\cT$ has modulus at most $1$. Hence $r\in\spec(\Phi)$ and
every eigenvalue of $\Phi$ has modulus at most $r$, so $\smax(\Phi)=r$.
Applying Theorem~A to $\cT$ and using~\eqref{eq:scaling},
\[
  \frac{\Tr\Phi}{r}\le d\,\frac{\mmin(\Phi)}{r}+d^2-d .
\]
Multiplying by $r=\smax(\Phi)>0$ gives~\eqref{eq:maps}.
\end{proof}

\begin{remark}
The map~\eqref{eq:doob} is a map-level analogue of the quantum Doob transform
of~\cite{CGLP18}, which there normalizes tilted generators using the dominant
eigenmatrix. In~\cite{vEC+26} a related $\omega$-weighted inner product is used to
symmetrize trace-preserving maps that have a faithful fixed point. Here we use
the normalization in the opposite direction, to \emph{produce} trace
preservation.
\end{remark}

\section{Removing the eigenmatrix hypothesis: proof of \texorpdfstring{\cref{thm:maps}}{Theorem 2}}\label{sec:general}

A general 2-positive map need not admit a positive definite eigenmatrix
for its adjoint. For example, the adjoint may be nilpotent, or every
positive semidefinite eigenmatrix may be singular because of reducibility.
We remove this hypothesis by perturbation.

\begin{proof}[Proof of \cref{thm:maps}]
Let $\Phi$ be 2-positive and let
\[
  \Delta(X)=\tr(X)\,I=\sum_{i,j=1}^d E_{ij}XE_{ij}^\dagger ,
\]
where $E_{ij}=\ketbra{i}{j}$ are the matrix units. By~\ref{P3}, $\Delta$ is completely
positive, and it is self-adjoint with respect to the Hilbert--Schmidt inner product.
For $\eps>0$ put
\[
  \Phi_\eps=\Phi+\eps\Delta ,
\]
which is 2-positive by~\ref{P3}.

Let $\cD_d=\{H\in\M:\ H\ge0,\ \tr H=1\}$ be the set of density matrices. It is
compact, convex and nonempty. For $H\in\cD_d$, positivity of $\Phi^*$ (by~\ref{P2})
gives
\begin{equation}\label{eq:lowerbound}
  \Phi_\eps^*(H)=\Phi^*(H)+\eps\,I\ \ge\ \eps\,I .
\end{equation}
In particular, $\tr\Phi_\eps^*(H)\ge\eps d>0$, so
\[
  F_\eps(H)=\frac{\Phi_\eps^*(H)}{\tr\Phi_\eps^*(H)}
\]
is a continuous self-map of $\cD_d$. By Brouwer's fixed-point theorem there is
$H_\eps\in\cD_d$ with $F_\eps(H_\eps)=H_\eps$. Setting
$r_\eps=\tr\Phi_\eps^*(H_\eps)>0$, we obtain
\[
  \Phi_\eps^*(H_\eps)=r_\eps H_\eps ,
\]
and~\eqref{eq:lowerbound} yields $H_\eps\ge(\eps/r_\eps)\,I>0$. Therefore
\cref{lem:normalization} applies to $\Phi_\eps$, and
\begin{equation}\label{eq:perturbed}
  \Tr\Phi_\eps\le d\,\mmin(\Phi_\eps)+(d^2-d)\,\smax(\Phi_\eps)
  \qquad\text{for every }\eps>0 .
\end{equation}
As $\eps\downarrow0$, $\Phi_\eps\to\Phi$. By~\ref{P4}, $\Tr\Phi_\eps\to\Tr\Phi$,
$\mmin(\Phi_\eps)\to\mmin(\Phi)$ and $\smax(\Phi_\eps)\to\smax(\Phi)$. Letting
$\eps\downarrow0$ in~\eqref{eq:perturbed} proves~\eqref{eq:maps}.
\end{proof}

\begin{remark}\label[remark]{rem:limit}
The limiting argument uses only the perturbed maps $\Phi_\eps$ and their spectra.
It does not require convergence of the eigenmatrices $H_\eps$, their inverses,
or the normalized maps $\cT_\eps$, nor an a priori positive lower bound on
$r_\eps$. Singular limiting eigenmatrices, a vanishing spectral radius, and
nontrivial Jordan blocks therefore cause no difficulty. As a by-product,
$\smax(\Phi)=\lim_\eps r_\eps$ is the spectral radius of $\Phi$ and
is an eigenvalue, which recovers the Perron--Frobenius property of positive maps
(cf.~\cite{EHK78}). The argument as a whole uses 2-positivity only through
Theorem~A.
\end{remark}

\section{Generators: proof of \texorpdfstring{\cref{thm:main}}{Theorem 1}}\label{sec:generators}

\begin{proof}[Proof of inequality~\eqref{eq:main}]
Let $L$ be conditionally 2-positive, and let $\lambda_1,\dots,\lambda_{d^2}$ be its
eigenvalues counted with algebraic multiplicity. For each $t>0$ the map $e^{tL}$ is
2-positive, so \cref{thm:maps} gives
\begin{equation}\label{eq:semigroup}
  \Tr e^{tL}\le d\,\mmin(e^{tL})+(d^2-d)\,\smax(e^{tL}) .
\end{equation}
By the spectral mapping theorem (which holds via the Jordan form,
without any diagonalizability assumption), the eigenvalues of $e^{tL}$ are
$e^{t\lambda_1},\dots,e^{t\lambda_{d^2}}$, with multiplicities. Since there are
finitely many $\lambda_j$, Taylor's theorem gives a constant $C$ such that
\[
  \bigl\lvert\Rea e^{t\lambda_j}-1-t\Rea\lambda_j\bigr\rvert\le Ct^2
  \qquad\text{for all }j\text{ and all }t\in(0,1].
\]
Taking a minimum or maximum over $j$ preserves this uniform error, so
\[
  \mmin(e^{tL})=1+t\,\mmin(L)+O(t^2),\qquad
  \smax(e^{tL})=1+t\,\smax(L)+O(t^2),
\]
and furthermore $\Tr e^{tL}=\sum_je^{t\lambda_j}=d^2+t\Tr L+O(t^2)$.
Substituting these expansions into~\eqref{eq:semigroup}, the constant terms $d^2$ cancel
on both sides. Dividing by $t>0$ and letting $t\downarrow0$ yields~\eqref{eq:main}.
\end{proof}

\begin{remark}\label[remark]{rem:inclusions}
Every 2-positive map $\Phi$ is conditionally 2-positive. Indeed, each power $\Phi^n$
is 2-positive, and $e^{t\Phi}=\sum_{n\ge0}\frac{t^n}{n!}\Phi^n$ is a norm limit of
nonnegative combinations of 2-positive maps. Hence \cref{thm:main} contains
\cref{thm:maps} as well as the generator statement, which settles both readings of
the conjecture in~\cite{vEC+26}. If $e^{tL}$ is trace-preserving for every
$t\ge0$, then $L^*(I)=0$ and \cref{lem:spectrum} gives
$|e^{t\lambda}|\le1$ for each eigenvalue $\lambda$ of $L$. Thus
$0\in\spec(L)$, $\Rea\lambda\le0$, and $\smax(L)=0$. If $\Phi$ is positive
and trace-preserving, then $\smax(\Phi)=1$. In these respective cases,
\cref{thm:main,thm:maps} reduce to~\cite[Thms.~2 and~1]{vEC+26}.
\end{remark}

\section{Sharpness for non-trace-preserving dynamics}\label{sec:sharp}

The bound~\eqref{eq:main} is attained by generators of completely positive
semigroups that are \emph{not} trace-preserving, so the extension is sharp
precisely in the regime it was designed for.

\begin{proposition}\label[proposition]{prop:sharp}
Fix $d\ge2$ and an orthonormal basis $\{\ket{0},\dots,\ket{d-1}\}$ of $\C^d$. Put
$V=\ketbra{0}{1}$ and $P=\ketbra{1}{1}$, and define
\begin{equation}\label{eq:G}
  G(X)=VXV^\dagger-\tfrac12(PX+XP),\qquad
  L_c=G+c\,\id_{\M}\quad(c\in\R).
\end{equation}
Then $e^{tL_c}$ is completely positive for all $t\ge0$, and it is trace-preserving
for some $t>0$ only if $c=0$. Moreover,
\[
  \Tr L_c=cd^2-d,\qquad \mmin(L_c)=c-1,\qquad \smax(L_c)=c,
\]
so that equality holds in~\eqref{eq:main} for every $c\in\R$.
\end{proposition}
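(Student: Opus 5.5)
The proof splits into three parts: complete positivity of the semigroup, the trace-preservation claim, and the spectral computation. For complete positivity, $G$ is in Lindblad form with jump operator $V$, since $V^\dagger V=\ketbra{1}{0}\,\ketbra{0}{1}=P$. Writing $K=-\tfrac12P$, we have $G(X)=VXV^\dagger+KX+XK^\dagger$. By the standard Lie--Trotter argument (or Evans' characterization of conditionally completely positive maps), $e^{tG}$ is completely positive: $e^{t(K\cdot+\cdot K^\dagger)}(X)=e^{tK}Xe^{tK^\dagger}$ is a congruence, hence completely positive by~\ref{P3}, and $e^{t\,V\cdot V^\dagger}$ is a convergent series of completely positive maps. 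Since $c\,\id$ commutes with $G$, $e^{tL_c}=e^{ct}e^{tG}$, which is completely positive because $e^{ct}>0$.

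For trace preservation, note that $\tr G(X)=\tr(V^\dagger VX)-\tr(PX)=0$, so $e^{tG}$ is trace-preserving. Hence $\tr e^{tL_c}(X)=e^{ct}\tr X$, and this equals $\tr X$ for all $X$ only if $e^{ct}=1$, i.e.\ $c=0$ when $t>0$.

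For the spectrum, it suffices to compute $\spec(G)$ and shift by $c$. I will use the matrix-unit basis $E_{jk}$. We have $PE_{jk}=\delta_{j1}E_{jk}$ and $E_{jk}P=\delta_{k1}E_{jk}$, so the dissipative part $-\tfrac12(P\cdot+\cdot P)$ is diagonal in this basis, with eigenvalue $-\tfrac12(\delta_{j1}+\delta_{k1})$ on $E_{jk}$. The jump part satisfies $VE_{jk}V^\dagger=\delta_{j1}\delta_{k1}E_{00}$, so it is nonzero only on $E_{11}$, which it sends to $E_{00}$. Order the basis so that $E_{00}$ comes before $E_{11}$. Then $G$ is triangular, and its eigenvalues are the diagonal entries $-\tfrac12(\delta_{j1}+\delta_{k1})$. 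These are $-1$ once (for $E_{11}$), $-\tfrac12$ for the $2(d-1)$ units $E_{1k},E_{k1}$ with $k\ne1$, and $0$ for the remaining $(d-1)^2$ units. This gives $\Tr G=-1-(d-1)=-d$, $\mmin(G)=-1$ and $\smax(G)=0$; the value $0$ occurs because $d\ge2$ guarantees $E_{00}$ exists. Shifting by $c$, we get $\Tr L_c=cd^2-d$, $\mmin(L_c)=c-1$ and $\smax(L_c)=c$.

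Equality in~\eqref{eq:main} is then the check
\[
d(c-1)+(d^2-d)c=cd^2-d .
\]
For the optimality claim in \cref{thm:main}, the inequality $\Tr L_c\le a\,\mmin(L_c)+(d^2-a)\smax(L_c)$ evaluates to $cd^2-d\le a(c-1)+(d^2-a)c=cd^2-a$, which forces $a\le d$. The main obstacle is the triangularity bookkeeping: one must confirm that the jump term only couples $E_{11}$ into $E_{00}$, so that it contributes nothing to the diagonal and leaves the spectrum unchanged.
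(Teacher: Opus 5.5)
Your proof is correct. It follows the same overall skeleton as the paper: show $e^{tG}$ is completely positive and trace-preserving, use $e^{tL_c}=e^{ct}e^{tG}$, compute $\spec(G)$ on matrix units, and shift by $c$. Two of the ingredients, however, are obtained by different means.

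For complete positivity, the paper writes down explicit Kraus operators $K_0(t)=I-P+e^{-t/2}P$ and $K_1(t)=\sqrt{1-e^{-t}}\,V$ for $e^{tG}$ and checks them against $G$ on matrix units. It then gets complete positivity and trace preservation at once from $K_0^\dagger K_0+K_1^\dagger K_1=I$. You instead split $G$ into the congruence generator $X\mapsto KX+XK^\dagger$ and the completely positive part $X\mapsto VXV^\dagger$. You then invoke the Lie--Trotter product formula together with closedness of the completely positive cone under composition and limits, and you obtain trace preservation separately from $\tr\circ G=0$. Your route needs no closed form for $e^{tG}$ and works verbatim for any generator in GKSL form, at the cost of citing the product formula. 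The paper's route is fully explicit and identifies $e^{tG}$ as the amplitude-damping channel.

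For the spectrum, the paper exhibits an eigenbasis, with $E_{11}-E_{00}$ for the eigenvalue $-1$, which proves diagonalizability as a by-product. Your triangularity argument reads off the eigenvalues with algebraic multiplicities directly. That is exactly what $\Tr$, $\mmin$ and $\smax$ require, and it avoids finding eigenvectors. Your closing check that $a\le d$ is the content of the paper's separate \cref{cor:optimal}.
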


\begin{proof}
On matrix units $E_{ij}=\ketbra{i}{j}$ we have
\[
  G(E_{ij})=\delta_{i1}\delta_{j1}\,E_{00}-\tfrac12(\delta_{i1}+\delta_{j1})\,E_{ij}.
\]
This yields the eigen-decomposition in \cref{tab:spectrum}. The listed eigenvectors
form a basis of $\M$, so $G$ is diagonalizable and
\[
  \Tr G=-\tfrac12\cdot2(d-1)-1=-d,\qquad \mmin(G)=-1,\qquad \smax(G)=0 .
\]

\begin{table}[ht]
\centering
\renewcommand{\arraystretch}{1.25}
\begin{tabular}{@{}ccl@{}}
\toprule
Eigenvalue of $G$ & Multiplicity & Eigenvectors\\
\midrule
$0$ & $(d-1)^2$ & $E_{ij}$,\quad $i,j\ne1$\\
$-\tfrac12$ & $2(d-1)$ & $E_{1j},\ E_{j1}$,\quad $j\ne1$\\
$-1$ & $1$ & $E_{11}-E_{00}$\\
\bottomrule
\end{tabular}
\caption{Spectrum of the amplitude-damping generator $G$ of~\eqref{eq:G}.}
\label{tab:spectrum}
\end{table}

A direct check on matrix units shows that
\[
  e^{tG}(X)=K_0(t)XK_0(t)^\dagger+K_1(t)XK_1(t)^\dagger,\qquad
  K_0(t)=I-P+e^{-t/2}P,\quad K_1(t)=\sqrt{1-e^{-t}}\,V .
\]
For example, $e^{tG}(E_{11})=E_{00}+e^{-t}(E_{11}-E_{00})=e^{-t}E_{11}+(1-e^{-t})E_{00}$,
which agrees with the Kraus form. Since
$K_0^\dagger K_0+K_1^\dagger K_1=(I-P+e^{-t}P)+(1-e^{-t})P=I$, the map $e^{tG}$ is
CPTP. Because $\id$ commutes with $G$, we get $e^{tL_c}=e^{ct}e^{tG}$. This map is
completely positive and multiplies traces by $e^{ct}$, which equals $1$ for some $t>0$
only if $c=0$. The spectral data of $L_c$ are those of $G$ shifted by $c$, which gives
the stated values. Finally,
\[
  d\,\mmin(L_c)+(d^2-d)\,\smax(L_c)=d(c-1)+(d^2-d)c=cd^2-d=\Tr L_c .
  \qedhere
\]
\end{proof}

\begin{corollary}\label[corollary]{cor:optimal}
Let $a>d$. For every $d\ge2$ there is a generator $L$ of a non-trace-preserving
completely positive semigroup such that
$\Tr L>a\,\mmin(L)+(d^2-a)\,\smax(L)$.
\end{corollary}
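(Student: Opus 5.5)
We use the equality family $L_c$ of \cref{prop:sharp} and rewrite the inequality as a statement about a sign. Every generator $L$ satisfies the identity
\[
  a\,\mmin(L)+(d^2-a)\,\smax(L)=d^2\smax(L)-a\bigl(\smax(L)-\mmin(L)\bigr).
\]
Hence replacing $d$ by a larger $a$ lowers the right-hand side by $(a-d)\bigl(\smax(L)-\mmin(L)\bigr)$. For $L=L_c$ this gap is $\smax(L_c)-\mmin(L_c)=1>0$.

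First I fix $d\ge2$, $a>d$, and any $c\in\R$; to be concrete I take $c=1$. I invoke \cref{prop:sharp} for three facts. The semigroup $e^{tL_c}$ is completely positive. It is not trace-preserving for any $t>0$, because $c\ne0$. And the spectral data are
\[
  \Tr L_c=cd^2-d,\qquad \mmin(L_c)=c-1,\qquad \smax(L_c)=c.
\]

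Second, I compute directly:
\[
  a\,\mmin(L_c)+(d^2-a)\,\smax(L_c)=a(c-1)+(d^2-a)c=cd^2-a<cd^2-d=\Tr L_c.
\]
The inequality holds because $a>d$, and this is the desired strict violation.

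Nothing here is a real obstacle. The only point to check is that the example is genuinely non-trace-preserving, which is why we need $c\ne0$. The case $c=0$ would also give a violation, but that semigroup is trace-preserving, so it would only show optimality in the classical setting. The "in particular" clause of \cref{thm:main} then follows at once from this corollary.
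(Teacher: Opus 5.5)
Your proof is correct and is essentially the paper's argument: take $L_c$ from \cref{prop:sharp} with $c\neq0$, use $\smax(L_c)-\mmin(L_c)=1$, and conclude $a\,\mmin(L_c)+(d^2-a)\,\smax(L_c)=cd^2-a<cd^2-d=\Tr L_c$. You are also right that $c\neq0$ is needed only to make the semigroup non-trace-preserving, which matches the paper's choice.
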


\begin{proof}
For $L=L_c$ with $c\neq0$, we have $\mmin(L)-\smax(L)=-1$, and therefore
$a\,\mmin(L)+(d^2-a)\smax(L)=d^2\smax(L)-a<d^2\smax(L)-d=\Tr L$.
\end{proof}

\section{Decay rates relative to the dominant growth rate}\label{sec:rates}

For non-trace-preserving generators, such as the tilted generators used to study
the full counting statistics of quantum jump trajectories~\cite{CGLP18}, the
leading eigenvalue is generally nonzero. It is then natural to measure decay
relative to it.

\begin{corollary}\label[corollary]{cor:rates}
Let $L$ be conditionally 2-positive with eigenvalues $\lambda_1,\dots,\lambda_{d^2}$,
counted with algebraic multiplicity, and define the relative decay rates
$\Gamma_j=\smax(L)-\Rea\lambda_j\ge0$. Then
\begin{equation}\label{eq:rates}
  \max_{k}\Gamma_k\le\frac1d\sum_{j=1}^{d^2}\Gamma_j ,
\end{equation}
and~\eqref{eq:rates} is equivalent to~\eqref{eq:main}.
\end{corollary}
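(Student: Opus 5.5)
The proof will be a direct algebraic rewriting of \eqref{eq:main} in terms of the $\Gamma_j$, together with a check that the $\Gamma_j$ are nonnegative. Nonnegativity is immediate from the definition of $\smax(L)$ as the largest real part of an eigenvalue. Since $\Gamma_j\ge0$, the maximum $\max_k\Gamma_k$ is attained at an eigenvalue with smallest real part, so
\[
  \max_k\Gamma_k=\smax(L)-\mmin(L).
\]
Next, by \ref{P1} the trace $\Tr L$ is real and equals $\sum_j\Rea\lambda_j$. Summing the definition of the $\Gamma_j$ over all $d^2$ eigenvalues, counted with algebraic multiplicity, gives
\[
  \sum_{j=1}^{d^2}\Gamma_j=d^2\smax(L)-\Tr L.
\]

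Substituting these two identities, \eqref{eq:rates} reads
\[
  \smax(L)-\mmin(L)\le\frac1d\bigl(d^2\smax(L)-\Tr L\bigr).
\]
Multiplying by $d>0$ and rearranging yields $\Tr L\le d\,\mmin(L)+(d^2-d)\,\smax(L)$, which is exactly \eqref{eq:main}. Every step is a reversible manipulation of real numbers: we only multiply by a positive constant and move terms across the inequality. Hence \eqref{eq:rates} and \eqref{eq:main} are equivalent for any linear map $L$ whose trace is real, and in particular for any conditionally 2-positive $L$.

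The inequality itself then follows by invoking \cref{thm:main}, which asserts \eqref{eq:main} for every conditionally 2-positive $L$. There is no real obstacle in this argument. The only points needing care are to note that $\Tr L$ is real by \ref{P1}, so that it equals the sum of the real parts of the eigenvalues, and to count eigenvalues with algebraic multiplicity so that the sum has exactly $d^2$ terms, which produces the coefficient $d^2\smax(L)$.
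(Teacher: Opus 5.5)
Your proof is correct and is essentially the paper's argument: use \ref{P1} to write $\Tr L=\sum_j\Rea\lambda_j$, compute $\sum_j\Gamma_j=d^2\smax(L)-\Tr L$ and $\max_k\Gamma_k=\smax(L)-\mmin(L)$, rearrange, and invoke \cref{thm:main}. One small correction: the identity $\max_k\Gamma_k=\smax(L)-\mmin(L)$ holds because $x\mapsto\smax(L)-x$ is decreasing, so it does not follow from $\Gamma_j\ge0$ as your wording suggests.
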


\begin{proof}
By~\ref{P1}, $\Tr L=\sum_j\Rea\lambda_j$. Hence
$\sum_j\Gamma_j=d^2\smax(L)-\Tr L$ and $\max_k\Gamma_k=\smax(L)-\mmin(L)$.
Inequality~\eqref{eq:rates} therefore reads
$d\,\smax(L)-d\,\mmin(L)\le d^2\smax(L)-\Tr L$, which is a rearrangement
of~\eqref{eq:main}.
\end{proof}

For trace-preserving semigroups, $\smax(L)=0$ and~\eqref{eq:rates} reduces to the
universal constraint~\eqref{eq:rates-classical}. Thus the restriction on relaxation
rates imposed by 2-positivity remains valid without trace preservation.
The relative rates are invariant under scalar shifts $L\mapsto L+c\,\id$
with $c\in\R$, since both $\smax(L)$ and every $\Rea\lambda_j$ shift by $c$.
General non-trace-preserving perturbations need not be scalar shifts and may
change the relative rates; the theorem asserts that the same universal
inequality still holds.
By \cref{prop:sharp}, \eqref{eq:rates} is attained by non-trace-preserving
dynamics.

\section{Concluding remarks}\label{sec:conclusion}

We have shown that the trace-preservation hypothesis in the universal spectral
bound of~\cite{vEC+26} can be removed, both for 2-positive maps and for their
generators, once the dominant eigenvalue is restored as $\smax$. The proof treats
the trace-preserving theorem as a black box and combines it with two soft tools:
a Doob-type normalization by a positive definite eigenmatrix of the adjoint, and a
perturbation that guarantees such an eigenmatrix exists. The same reduction
applies to a continuous, positively homogeneous spectral inequality on an
ambient class of positive maps, provided that class is stable under the
normalization congruences, positive rescaling, and the perturbation
$\Phi\mapsto\Phi+\eps\Delta$, and the inequality is known for its
trace-preserving members. These closure requirements concern the ambient
class, not its trace-preserving subset.

The other questions raised in~\cite{vEC+26} concern whether complete
positivity can be spectrally distinguished from 2-positivity, and whether
positive trace-preserving maps in dimension $d\ge3$ can violate~\eqref{eq:vEC-intro}.
We do not address these questions here. Since normalization and perturbation
use only positivity, validity of the latter bound for all positive
trace-preserving maps would imply its homogeneous extension to all positive
maps; conversely, a trace-preserving counterexample would already be a
counterexample in the larger class.

\section*{Acknowledgements and use of AI tools}

The author thanks the authors of~\cite{vEC+26} for formulating the problem
clearly and for making their work openly available.

In accordance with arXiv policy, AI tools are not listed as authors. OpenAI
ChatGPT was used to assist in preparing this paper. This assistance included
exploring proof strategies, drafting and checking derivations, and editing the
exposition. All mathematical content was reviewed by the author, who takes full
responsibility for the correctness and originality of the work. To the author's
knowledge, the conjecture of~\cite[Sec.~3]{vEC+26} had not been resolved in the
literature at the time of writing. The present argument has not yet undergone
independent peer review.

\section*{Data and Code Availability}
The supporting code package used to verify the analytical claims in this paper is open-source and publicly available on GitHub. It has been permanently archived on Zenodo at \href{https://doi.org/10.5281/zenodo.23226133}{doi:10.5281/zenodo.23226133}.



\begin{thebibliography}{99}

\bibitem{AL07}
R.~Alicki and K.~Lendi,
\emph{Quantum Dynamical Semigroups and Applications}, 2nd ed.,
Lecture Notes in Physics~717, Springer, Berlin, 2007.

\bibitem{BP02}
H.-P. Breuer and F.~Petruccione,
\emph{The Theory of Open Quantum Systems},
Oxford University Press, Oxford, 2002.

\bibitem{CGLP18}
F.~Carollo, J.~P. Garrahan, I.~Lesanovsky, and C.~P\'erez-Espigares,
Making rare events typical in Markovian open quantum systems,
\emph{Phys. Rev. A} \textbf{98} (2018) 010103(R);
\href{https://arxiv.org/abs/1711.10951}{arXiv:1711.10951}.

\bibitem{CFKO21}
D.~Chru\'sci\'nski, R.~Fujii, G.~Kimura, and H.~Ohno,
Constraints for the spectra of generators of quantum dynamical semigroups,
\emph{Linear Algebra Appl.} \textbf{630} (2021) 293--305.

\bibitem{CKKS21}
D.~Chru\'sci\'nski, G.~Kimura, A.~Kossakowski, and Y.~Shishido,
Universal constraint for relaxation rates for quantum dynamical semigroup,
\emph{Phys. Rev. Lett.} \textbf{127} (2021) 050401.

\bibitem{CvEKM25}
D.~Chru\'sci\'nski, F.~vom Ende, G.~Kimura, and P.~Muratore-Ginanneschi,
A universal constraint for relaxation rates for quantum Markov generators:
complete positivity and beyond,
\emph{Rep. Prog. Phys.} \textbf{88} (2025) 097602;
\href{https://doi.org/10.1088/1361-6633/ae075f}{doi:10.1088/1361-6633/ae075f};
\href{https://arxiv.org/abs/2505.24467}{arXiv:2505.24467}.

\bibitem{EHK78}
D.~E. Evans and R.~H{\o}egh-Krohn,
Spectral properties of positive maps on $C^*$-algebras,
\emph{J. London Math. Soc.} (2) \textbf{17} (1978) 345--355.

\bibitem{Eva77}
D.~E. Evans,
Conditionally completely positive maps on operator algebras,
\emph{Q. J. Math.} \textbf{28} (1977) 271--283.

\bibitem{Kat80}
T.~Kato,
\emph{Perturbation Theory for Linear Operators}, 2nd ed.,
Springer, Berlin, 1980.

\bibitem{Kim02}
G.~Kimura,
Restriction on relaxation times derived from the Lindblad-type master equations
for two-level systems,
\emph{Phys. Rev. A} \textbf{66} (2002) 062113.

\bibitem{MKC25}
P.~Muratore-Ginanneschi, G.~Kimura, and D.~Chru\'sci\'nski,
Universal bound on the relaxation rates for quantum Markovian dynamics,
\emph{J. Phys. A: Math. Theor.} \textbf{58} (2025) 045306.

\bibitem{RH12}
\'A.~Rivas and S.~F. Huelga,
\emph{Open Quantum Systems: An Introduction},
Springer, Berlin, Heidelberg, 2012.

\bibitem{vEC+26}
F.~vom Ende, D.~Chru\'sci\'nski, G.~Kimura, and P.~Muratore-Ginanneschi,
Universal bound on the eigenvalues of 2-positive trace-preserving maps,
\emph{Linear Algebra Appl.} \textbf{730} (2026) 262--275;
\href{https://arxiv.org/abs/2506.02145}{arXiv:2506.02145}.

\bibitem{Wat18}
J.~Watrous,
\emph{The Theory of Quantum Information},
Cambridge University Press, Cambridge, 2018.

\bibitem{WECC08}
M.~M. Wolf, J.~Eisert, T.~S. Cubitt, and J.~I. Cirac,
Assessing non-Markovian quantum dynamics,
\emph{Phys. Rev. Lett.} \textbf{101} (2008) 150402.

\end{thebibliography}
\end{document}